\pgfplotsset{compat=1.16} 
\newcommand{\arrival}{\mathcal{A}}
\newcommand{\E}{\mathbb{E}}
\newcommand{\qedwhite}{\hfill \ensuremath{\Box}}
\newcommand{\policybest}{$\pi^*$\xspace}
\title{How to Rent GPUs on a Budget\vspace{-.2in}}
\author{\vspace{-0.05in} Zhouzi Li\inst{1} \and
Benjamin Berg \inst{3} \and
Arpan Mukhopadhyay\inst{2} \and
Mor Harchol-Balter\inst{1}
}
\authorrunning{Li et al.}
\institute{Carnegie Mellon University \and
Warwick University \and
University of North Carolina at Chapel Hill
}
\begin{document}
\maketitle
\vspace{-.2in}

\vspace{-.05in}
\section{Introduction}
\vspace{-.05in}
\label{sec:intro}
The explosion in Machine Learning (ML) over the past ten years has led to a dramatic increase in demand for GPUs to train ML models \cite{jouppi2017datacenter}.
Because it is prohibitively expensive for most users to build and maintain a large GPU cluster, large cloud providers (Microsoft Azure, Amazon AWS, Google Cloud) have seen explosive growth in demand for renting cloud-based GPUs.
For example, Amazon's profit from renting GPUs and related infrastructure is expected to grow to 20 billion dollars by 2026, roughly 20\% of the profit currently generated by AWS~\cite{gpureport}.
This trend makes the question of how to efficiently train ML models using cloud-based infrastructure one of the most important problems facing the computer systems community\cite{subramanya2023sia,qiao2021pollux,moritz2018ray,misra2021rubberband}.

In this cloud computing paradigm, the user is responsible for devising a \emph{rental policy} that decides how many GPUs to rent at every moment in time\cite{salvaris2018microsoft}.
The cloud provider's goal is to satisfy all user resource demands with high probability.
ML training jobs can be parallelized across different numbers of GPUs, so the user generally has many options for how many GPUs to use for each of their jobs\cite{qiao2021pollux,subramanya2023sia}.
Allocating more GPUs to a single training job will cause the job to complete more quickly.
However, the user pays for each GPU-hour they use, and training jobs receive a diminishing marginal benefit from running on additional GPUs\cite{lin2018model}.
Hence, allocating too many GPUs to a single training job can dramatically increase the overall cost that the user pays to the cloud provider\cite{zheng2019cynthia}.
This raises the question of how a user can complete training jobs quickly without spending too much.
Despite the ubiquity of model training in the cloud, relatively little is understood about what rental policy a user should employ.

\vspace{3mm}
\noindent\textbf{The Problem.}
We consider the case where the user wants to minimize the average response time across a stream of arriving ML training jobs --- the average time from when a job is submitted until it is completed.
Each training job has its own level of parallelizability and \emph{job size} (amount of inherent work) that reflect the particular model and the training data being used.
To complete their jobs, each user has a limited \emph{operating budget} that they are willing to pay per hour to train the models used by their application.
For example, a user might need to train 5 models each hour, and they might be willing to spend \$50 per hour on average to complete these training jobs.
The user's goal is to minimize the mean response time across their jobs while respecting their operating budget.

To solve this problem, the user must decide both {\em (i)} how many GPUs to rent at every moment in time, and {\em (ii)} how to divide these GPUs across the training jobs that are present at that moment.  For example, how many additional GPUs should be rented during times of higher load?   How should GPUs be split between jobs with smaller size (small inherent work)  versus  those of larger size?
How should GPUs be split between jobs with higher parallelizability versus less parallelizable jobs?  At present, the user has little guidance on how to make these decisions.  The goal of this paper is to provide an optimal rental policy for the user as a function of their workload and their operating budget.  

For every job in the system, a rental policy must balance a tradeoff between the response time of the job and the cost of training the job.
This is particularly difficult because each job is allowed to have a different \emph{speedup function} that describes how fast the job runs as a function of the number of GPUs it is allocated.
Furthermore, we allow for an arbitrary arrival process of training jobs into the system, and allow job sizes to follow an arbitrary distribution.

\vspace{3mm}
\noindent\textbf{Our Model.}
We consider a user who submits a stream of ML training jobs to the system over time.
Our user has $M$ types of jobs. 
Each job type is associated with a job size distribution $X_i$ and a speedup function $s_i(k)$.
If a type-$i$ job of size $x$ runs on $k\geq 1$ GPUs, it will complete in time $\frac{x}{s_i(k)}$.
We assume $k$ can be fractional, but prior work has shown how to convert a policy with fractional assignments into a policy that only allocates whole GPUs \cite{ghanbarian2023performance}.

We make some mild assumptions about the form of the speedup functions. 
For any job type $i$, $s_i(k)$ should fulfill the following properties:
\begin{itemize}
    \item $s_i(k)$ is defined on $k \in [1,+\infty)$ and is continuous;
    \item {\em Monotonicity}:  $s_i(k_1)\leq s_i(k_2)$ for any $1\leq k_1<k_2$;
    \item {\em Concavity}:  $s_i(k_1)/k_1 \geq s_i(k_2)/k_2$ for any $1\leq k_1<k_2$.
\end{itemize}

We assume type-$i$ jobs arrive with mean rate $\lambda_i$ and have mean job size $\E[X_i]$.
If $N_i(t)$ denotes the number of type-$i$ arrivals by time $t$, we assume 
\begin{equation}
    \lambda_i = \lim_{t \to \infty} \frac{N_i(t)}{t} \quad \mbox{w.p. 1} \quad\mbox{and}\quad \E[X_i]= \lim_{t \to \infty} \frac{\sum_{j=1}^{N_i(t)}X_{ij}}{N_i(t)} \quad \mbox {w.p. 1}.
    \label{eq:well-behaved}
\end{equation}

We do {\em not} assume that job sizes are independent, nor do we assume that the inter-arrival times are independent. 
For convenience, we define the \emph{system load} contributed by type-$i$ jobs to be $\rho_i=\lambda_i\E[X_i]$.

The user's goal is to minimize the mean response time across all jobs, subject to the time-average budget constraint. 
Let $T_j$ denote the response time of the $j$th job, the time from when the job arrives until it is complete.  Then
\begin{eqnarray*}
\mbox{Average response time } = \E[T] & = & \lim_{t \to \infty} \frac{\sum_{j=1}^{N(t)} T_j}{N(t)}. \label{eqn:meanresponsetime}
\end{eqnarray*}

The user must maintain a time-average operating budget $\bar{B} < b$.  We assume there is a fixed cost to rent a single GPU for an hour, so we can state the operating budget as a time-average number of GPUs that are rented.   Mathematically, if $K(s)$ denotes the number of GPUs rented at time $s$, then
\begin{eqnarray*}
\mbox{Time-average budget} = \bar{B} & = & \lim_{t \to \infty} \frac{\int_0^t K(s) ds}{t}. \label{eqn:budget}
\end{eqnarray*}
We restrict our analysis to rental policies such that $\bar{B}$ exists.
Finally, we assume that the budget limit, $b$, is sufficiently high to allow the system to be stable:
\begin{equation}
\label{assumption:stability}
    \sum_{i=1}^M \rho_i < b.
\end{equation}
If this assumption is violated, the work in the system will explode: the left hand side is the average rate of work coming into the system, while the right hand side is the maximum average rate of completing work given a time average budget limit $b$ (because of the sub-linearity of the speedup functions).



\vspace{3mm}
\noindent\textbf{Our Results.}
In this extremely general model, we show that under very mild technical assumptions, the optimal scheduling policy for jobs is surprisingly simple.
We begin in Section \ref{sec:offline} by considering an offline variant of the problem, where job sizes and arrival times are known \emph{a priori}.
We prove that, in this offline case, the optimal policy should never queue jobs, and that the optimal policy is a {\em fixed-width} policy that runs every type-$i$ job (regardless of its size) on some fixed number of GPUs, $k_i$ ($k_i$ can be computed via a convex optimization problem).
In Section \ref{sec:online} we argue that, because the offline optimal policy can be implemented online, it is also optimal for the online problem.

Figure~\ref{fig:pic} shows our optimal policy applied to two types of jobs.  In Figure~\ref{fig:pic}(a), we see the two speedup functions.  In Figure~\ref{fig:pic}(b), we see the optimal allocation of GPUs to each type.  Finally, Figure~\ref{fig:pic}(c) shows the Pareto frontier between mean response time and budget for the optimal policy.

\begin{figure}[h]
    \centering
        \begin{subfigure}{.3\textwidth}
        \centering
        \includegraphics[width=1\linewidth]{./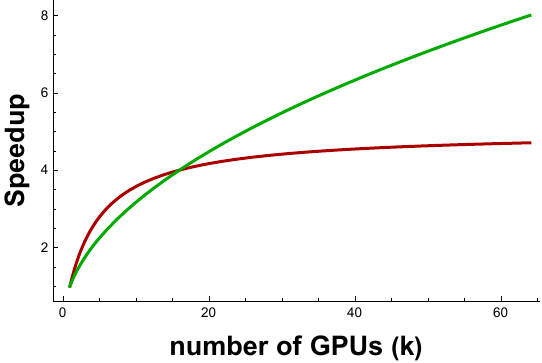}
        \caption{Speed up functions.}
    \end{subfigure}
    \quad
    \begin{subfigure}{.3\textwidth}
        \centering
        \includegraphics[width=1\linewidth]{./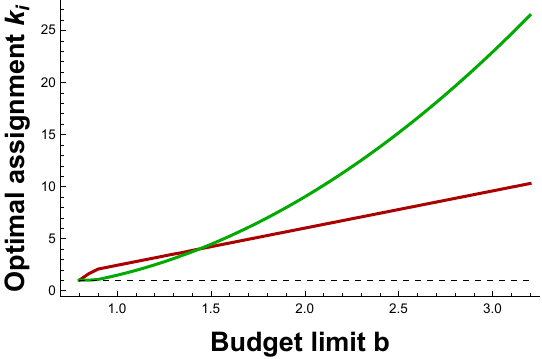}
        \caption{The optimal $k_i$.}
    \end{subfigure}
    \quad
    \begin{subfigure}{.3\textwidth}
        \centering
        \includegraphics[width=1\linewidth]{./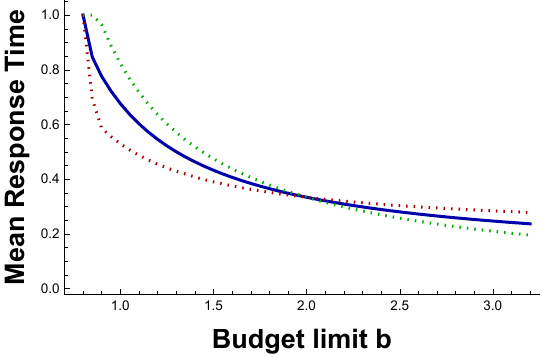}
        \caption{Pareto Frontier.}
    \end{subfigure}
    \caption{Example with two job types. Type 1 jobs (red) have a speedup function, $s_1(k)$, following Amdahl's Law, with 0.8 fraction parallelizable. Type 2 jobs (green) have a speedup function $s_2(k) = \sqrt{k}$. Both types have arrival rate $0.4$ and mean job size $1$.  The blue line in (c) shows the overall mean response time.}
    \label{fig:pic}
\end{figure}
\vspace{-.4in}

\section{Results and Analysis}
\label{sec:analysis}
In this section, we derive the optimal policy for the stochastic model that we introduced in Section~\ref{sec:intro}. We start by defining a {\em fixed width policy}.
We then define a particular fixed width policy, \policybest, and prove its optimality.

\begin{definition}[Fixed width policy]
\label{def:fixedwidth}
    A \textit{fixed width} policy chooses constants $k_1, k_2, \ldots k_M$.  All type-$i$ jobs are assigned $k_i$ GPUs immediately on arrival.  Jobs do not queue and do not change the number of GPUs they utilize as they run.
\end{definition}

\begin{definition}[Optimal policy, \policybest]
    \label{def:policybest}
Under the policy \policybest,
whenever a type-$i$ job comes into the system, it is immediately assigned $k_i$ cores until it is completed, where $k_i$ is the solution of the following optimization problem: 
\begin{equation}
        \begin{aligned}
        & \underset{k_i,i\in[M]} {\text{minimize}}
        & & 
        \sum_{i=1}^M\frac{  \rho_i }{s_i(k_i)}
        \\
        & \text{subject to}
        & & 
        \sum_{i=1}^M\frac{\rho_i k_i}{s_i(k_i)}\leq b
        ,\\
        \end{aligned}
    \label{eq:online opt}
\end{equation}
\end{definition}
 Note that \eqref{eq:online opt} is a feasible problem because setting all $k_i=1$ is feasible by the stability assumption \eqref{assumption:stability}. We can rewrite \eqref{eq:online opt} as a convex optimization problem, making it easy to solve numerically (we omit this for lack of space).

 Our main theorem is stated as below.

\begin{theorem}[\policybest is optimal]
   In our stochastic model, policy \policybest minimizes the mean response time among all online policies, where $\bar{B} \leq b$.
   \label{thm:main}
\end{theorem}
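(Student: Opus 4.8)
The plan is to prove a matching lower bound: \emph{every} policy respecting $\bar B\le b$ (offline or online) has mean response time at least $\bigl(\sum_i\lambda_i\bigr)^{-1}$ times the optimal value of \eqref{eq:online opt}, and then to observe that \policybest attains this bound and is online-implementable. It is cleanest to develop the lower bound in the offline setting of Section~\ref{sec:offline}, because there it reduces to pathwise reasoning per job. Fix any policy and any job $j$ of type $i$ with size $x_j$; say it holds $k_j(t)\ge 0$ GPUs at time $t$, arrives at $a_j$, and completes at $c_j$, so $\int_{a_j}^{c_j}s_i(k_j(t))\,dt=x_j$ and $T_j=c_j-a_j$. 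Write $G_j=\int_{a_j}^{c_j}k_j(t)\,dt$ for the GPU-hours consumed by $j$.

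\emph{Step 1 (reduce each job to a constant-width run).} Since the budget is only a long-run time average, with no instantaneous cap, and $s_i$ is nondecreasing, idling or pausing a job is never helpful; one may assume each job runs continuously at width $\ge 1$ (this is the ``never queue'' claim). Letting $f_i=s_i^{-1}$ (increasing and convex, being the inverse of an increasing concave function), pointwise $k_j(t)\ge f_i(s_i(k_j(t)))$, and Jensen's inequality applied to $\frac1{T_j}\int_{a_j}^{c_j} f_i(s_i(k_j(t)))\,dt$ gives $G_j\ge T_j\,f_i(x_j/T_j)$, with equality when $j$ runs at the constant width $f_i(x_j/T_j)$. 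So, for the purpose of a lower bound, we may replace every job by a constant-width run without increasing response times or GPU-hours.

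\emph{Step 2 (equalize within a type).} Parametrize a constant-width run of type-$i$ job $j$ by $r_j=1/s_i(k_j)$, so $T_j=r_j x_j$ and the minimal GPU-hours are $x_j\,f_i(1/r_j)$. The map $r\mapsto f_i(1/r)$ is convex on $r>0$ (since $f_i$ is convex increasing and $r\mapsto 1/r$ is convex decreasing). Hence replacing all type-$i$ jobs' parameters by the size-weighted average $\bar r_i=\bigl(\sum_{j:\,\mathrm{type}(j)=i}x_j r_j\bigr)\big/\bigl(\sum_{j:\,\mathrm{type}(j)=i}x_j\bigr)$ leaves $\sum_{j}T_j$ unchanged and, by Jensen, does not increase $\sum_j x_j f_i(1/r_j)$. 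Thus the lower bound is not affected if we further restrict to fixed-width policies in the sense of Definition~\ref{def:fixedwidth}, running all type-$i$ jobs at a common width $k_i$ with $1/s_i(k_i)=\bar r_i$. Then, using the well-behaved assumptions \eqref{eq:well-behaved} via a renewal-reward computation (contributions of jobs in progress at time $t$ are $o(t)$ when $\E[T]<\infty$), one gets $\E[T]=\bigl(\sum_i\lambda_i\bigr)^{-1}\sum_i \rho_i/s_i(k_i)$ and $\bar B=\sum_i \rho_i k_i/s_i(k_i)$; minimizing $\E[T]$ over feasible fixed-width policies is therefore exactly problem~\eqref{eq:online opt} up to the positive constant $\bigl(\sum_i\lambda_i\bigr)^{-1}$, whose minimizer defines \policybest.

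\emph{Step 3 (close the loop).} Chaining Steps 1--2, every policy with $\bar B\le b$ has $\E[T]$ at least $\bigl(\sum_i\lambda_i\bigr)^{-1}$ times the optimum of \eqref{eq:online opt}, and \policybest meets this bound with equality. Since \policybest needs only the type of each arriving job, which is revealed on arrival, it is a valid online policy; as online policies form a subclass of offline policies, the offline lower bound applies to them as well, proving Theorem~\ref{thm:main} (this is the content of Section~\ref{sec:online}). I expect the main obstacle to be the rigor of Step~1: ruling out, under a purely long-run budget, that pausing, preemption, migration, or transient sub-unit allocations help; handling the domain of $f_i$ (e.g.\ when $s_i$ is bounded, or when $T_j>x_j/s_i(1)$ pushes $x_j/T_j$ below $s_i(1)$); and carefully passing from the per-job pathwise inequalities to the time-average quantities $\E[T]$ and $\bar B$ using only the almost-sure limits in \eqref{eq:well-behaved}. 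The convexity manipulations in Step~2 are then routine.
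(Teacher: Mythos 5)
Your proposal reaches the same conclusion by the same overall skeleton as the paper --- reduce to the offline problem, show fixed-width policies suffice, reduce to the convex program \eqref{eq:online opt}, and then note that \policybest is online-implementable --- but it executes the key reduction differently. The paper's Lemma~\ref{lemma:fixwidth} is an iterative pairwise exchange: whenever two different widths $k_1,k_2$ are used (across jobs or within one job), it replaces both by the \emph{time-weighted average} width and invokes concavity of $s_i$ to show work completes sooner at no extra GPU-hours; it then fixes the budget and improves response time. You instead prove a one-shot \emph{lower bound}: holding each job's response time fixed, Jensen applied to the convex inverse $f_i=s_i^{-1}$ shows constant width minimizes that job's GPU-hours, and a second Jensen step equalizes widths within a type while preserving total response time. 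The dual orientation (you fix response times and minimize budget; the paper fixes budget and minimizes response time) is immaterial, but your route has a real advantage: it avoids the paper's somewhat informal ``iteratively remove every violation'' step over an infinite sequence of violations, since your Jensen inequalities can be applied at each finite horizon $t$ before taking limits. The price is exactly the technical debt you flag in Step~3 (extending $f_i$ below $s_i(1)$ to handle pausing, and the limit interchange), which is comparable to what the paper also leaves implicit.

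One concrete slip in Step~2: with $r_j=1/s_i(k_j)$ and $T_j=r_jx_j$, the GPU-hours of a constant-width run are $k_jT_j=x_j\,r_j\,f_i(1/r_j)$, not $x_j\,f_i(1/r_j)$ --- you dropped a factor of $r_j$. Consequently the size-weighted Jensen step must be applied to $h(r)=r\,f_i(1/r)$ rather than to $r\mapsto f_i(1/r)$. Fortunately $h$ is the perspective function of the convex $f_i$ (restricted to the ray through $1$) and is therefore also convex on $r>0$, so the conclusion of Step~2 --- equalizing to the size-weighted average $\bar r_i$ preserves $\sum_jT_j$ and does not increase $\sum_jk_jT_j$ --- still holds; but as written the inequality is asserted for the wrong quantity. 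With that correction, and the domain/limit issues you already identify handled, the argument is sound and matches the paper's Lemmas~\ref{lemma:budget of fixed width} and~\ref{lemma:policybest} from Step~2 onward.
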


\noindent\textbf{Outline of our proof:}
The rest of this section is devoted to proving Theorem~\ref{thm:main}.
Our proof has two parts:  In Section~\ref{sec:offline}, we consider an offline variant of the problem and show that \policybest minimizes the mean response time for ``well-behaved'' offline arrival sequences.  
In Section~\ref{sec:online}, we return to the online problem, showing that \policybest is the best policy for the online setting as well.

\subsection{The offline problem}\label{sec:offline}
We now define an offline variant of our problem.
A sample path $\arrival$ is an infinite sequence of arrival times and job sizes, where $x_{ij}$ denotes the size of the $j$th arrival of type $i$.  
In our offline setting, we assume that job sizes and arrivals times are known to the system.
Let $n_i(t)$ denote the number of type-$i$ arrivals by time $t$, and $n(t):=\sum_{i=1}^M n_i(t)$ denote the total number of arrivals by time $t$. 

We say that the sample path $\arrival$ is {\em well-behaved} if its arrival process and job sizes both converge to the user's averages, as previously defined in \eqref{eq:well-behaved}:
\[\lambda_i = \lim_{t \to \infty} \frac{n_i(t)}{t}  \quad\mbox{and}\quad \E[X_i]= \lim_{t \to \infty} \frac{\sum_{j=1}^{n_i(t)}x_{ij}}{n_i(t)} \qquad \forall i.\]

\begin{lemma}[Fixed Width Offline]
\label{lemma:fixwidth}
For any well-behaved sample path $\arrival$, there exists a fixed width policy that minimizes the mean response time. 
\end{lemma}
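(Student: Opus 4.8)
The plan is to take an arbitrary offline policy on a well-behaved sample path $\arrival$, transform it into a fixed-width policy (Definition~\ref{def:fixedwidth}) with no larger mean response time, and then argue that the best fixed-width policy exists. We may restrict to \emph{stable} policies, since an unstable one has infinite mean response time while the fixed-width policy with all $k_i=1$ is stable by \eqref{assumption:stability}. First I would argue no-queueing is without loss: since the budget is only a \emph{time-average} (no instantaneous cap), reschedule each job over an interval starting at its arrival, reusing its GPU-allocation profile with the idle stretches deleted; every response time weakly decreases, each job's total GPU-seconds is unchanged, so $\bar B$ is unchanged as $t\to\infty$. Next, constant-width-per-job is without loss: for a type-$i$ job of size $x$ served on $k(\cdot)$ GPUs over an interval of length $T$, with $\bar k:=\tfrac1T\int k(t)\,dt$, concavity of $s_i$ and Jensen give $x=\int s_i(k(t))\,dt\le T\,s_i(\bar k)$, so running it at the constant width $\max(\bar k,1)$ finishes no later using no more GPU-seconds (when $\bar k<1$, use width $1$ for a proportionally shorter time, again using no more GPU-seconds). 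Thus we may assume every type-$i$ job $j$ runs at a constant width $k_{ij}\ge1$ with response time $x_{ij}/s_i(k_{ij})$, occupying $k_{ij}$ GPUs for exactly that long.

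Set $g_i(k):=1/s_i(k)$ and $h_i(k):=k/s_i(k)$. Summing over the first $n(t)$ arrivals, dividing by $n(t)$ and by $t$, and using well-behavedness of $\arrival$ to discard the boundary jobs in progress at time $t$, the policy's metrics become $\E[T]=\big(\sum_i\lambda_i\big)^{-1}\sum_i\lambda_i\,\overline{x_{ij}g_i(k_{ij})}$ and $\bar B=\sum_i\lambda_i\,\overline{x_{ij}h_i(k_{ij})}\le b$, where $\overline{(\cdot)}$ is the long-run average over type-$i$ jobs (kept as averages of products, since $k_{ij}$ may correlate with $x_{ij}$). For each type let $\tilde\mu_i$ be the size-weighted empirical law of the widths, $\int f\,d\tilde\mu_i:=\overline{x_{ij}f(k_{ij})}/\E[X_i]$ (a probability law because $\overline{x_{ij}}=\E[X_i]$); then $\big(\int h_i\,d\tilde\mu_i,\ \int g_i\,d\tilde\mu_i\big)$ lies in the convex hull of the curve $\{(h_i(k),g_i(k)):k\ge1\}$. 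The structural input is that $h_i$ is nondecreasing (this is exactly sublinearity, $s_i(k)/k$ nonincreasing), $g_i$ is nonincreasing (monotonicity), and $g_i$ is a \emph{convex} function of $h_i$ (for smooth $s_i$ the slope $\mathrm dg_i/\mathrm dh_i=-s_i'/(s_i-ks_i')$ is nondecreasing in $k$ precisely because $s_i''\le0$). Hence the curve is the lower-left boundary of its convex hull, so some single width $k_i\ge1$ satisfies $h_i(k_i)\le\int h_i\,d\tilde\mu_i$ and $g_i(k_i)\le\int g_i\,d\tilde\mu_i$; replacing every $k_{ij}$ by $k_i$ gives a (still feasible) fixed-width policy whose $\bar B$ and $\E[T]$ are both no larger.

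It remains to note that feasible fixed-width policies are exactly the width vectors $(k_i)$ with $\sum_i\rho_i h_i(k_i)\le b$, on which the objective $\big(\sum_i\lambda_i\big)^{-1}\sum_i\rho_i g_i(k_i)$ is continuous and the budget confines each $k_i$ to a compact range (or else $g_i(k_i)\to0$, and we pass to a limit), so a minimizer exists; by the preceding steps it is optimal among \emph{all} offline policies, and its widths solve \eqref{eq:online opt}. The hard part will be the convexity step: deducing from monotonicity and concavity of $s_i$ that the per-type cost-versus-delay curve $k\mapsto(k/s_i(k),\,1/s_i(k))$ is convex, together with correctly accounting for size-weighting — larger jobs use proportionally more budget — so that collapsing to one width per type is genuinely lossless. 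The no-queueing and constant-width reductions are conceptually routine but require care with the $t\to\infty$ limits and the in-progress jobs at time $t$.
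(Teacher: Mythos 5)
Your proposal is correct and reaches the same conclusion, but the decisive step is carried out by a genuinely different argument from the paper's. The paper's proof is a local exchange argument: whenever two different widths $k_1\neq k_2$ are used for type-$i$ work, it replaces both by the duration-weighted average width, invokes concavity of $s_i$ to show the same work finishes sooner using no more GPU-hours, and then ``iteratively removes every violation.'' You instead first reduce each individual job to a single constant width via Jensen (this matches the paper's concavity step, applied within one job), and then perform the across-jobs collapse globally: you observe that the per-type tradeoff curve $k\mapsto\bigl(k/s_i(k),\,1/s_i(k)\bigr)$ is the graph of a convex decreasing function of the budget coordinate (your slope computation checks out: $\frac{d}{dk}\bigl(\tfrac{-s_i'}{s_i-ks_i'}\bigr)=\tfrac{-s_i''\,s_i}{(s_i-ks_i')^2}\ge 0$), so the size-weighted empirical barycenter of the points $\bigl(h_i(k_{ij}),g_i(k_{ij})\bigr)$ is dominated in both coordinates by a single point on the curve. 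Your route buys two things: it sidesteps the paper's delicate infinite iteration over violations, and the explicit size-weighting shows exactly why collapsing to one width per type is lossless even when widths correlate with job sizes; the paper's route is more elementary and needs no smoothness. Two caveats, both shared with the paper's own proof to some degree: genuine concavity of $s_i$ (Jensen, or $s_i''\le 0$) is used, which is stronger than the stated ``concavity'' axiom that $s_i(k)/k$ be nonincreasing; and the long-run averages $\overline{x_{ij}g_i(k_{ij})}$ need not exist for an arbitrary competing policy, so the limit manipulations should be done with $\limsup$s or along convergent subsequences, as you yourself flag at the end.
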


\begin{proof}
First, we show that an optimal policy does not queue jobs.
Assume there is a job queueing in the optimal policy. Removing its waiting time makes the mean response time lower but leaves the total GPU-hours the same, a contradiction.

Next, we show that an optimal policy assigns the same number of GPUs to each type-$i$ job.
    Suppose the optimal policy $P$ is not fixed width for type-$i$ jobs. 
    Then either $P$ assigns $k_1\neq k_2$ GPUs to different type-$i$ jobs, or $P$ assigns $k_1\neq k_2$ GPUs to a single type-$i$ job at different times. Let $x_1,x_2$ be the work completed using $k_1$ and $k_2$ cores, respectively.
    

    We construct a policy $P'$ which uses the same GPU-hours, but has lower mean response time than $P$.
    $P'$ is identical to $P$ except for the first time that $P$ violates the Fixed Width definition.
    Instead of using two different assignments $k_1$ and $k_2$, $P'$ will choose a constant number of GPUs, $k$, to use in both instances.
    Let $t_1=\frac{x_1}{s_i(k_1)}$ and $t_2=\frac{x_2}{s_i(k_2)}$ be the durations of each of the GPU assignments under $P$.
    We choose $k$ to be the time average of the two assignments by setting
    \begin{equation}
        k= k_1\cdot \frac{t_1}{t_1+t_2} + k_2 \cdot \frac{t_2}{t_1+t_2}.
        \label{eq:fix-width proof eq1}
    \end{equation}

    The concavity of $s_i(k)$ implies that the total time to complete $x_1$ and $x_2$ is lower under $P'$.  
    To see this, consider the average rate of work completion when processing $x_1$ and $x_2$ under both policies.
    The average work rate under $P$ is 
    $$\frac{t_1}{t_1+t_2}\cdot s_i(k_1) + \frac{t_2}{t_1+t_2}\cdot s_i(k_2).$$
    The average work rate under $P'$ is $s_i(k)$.
    By concavity, we have
    $$\frac{t_1}{t_1+t_2}\cdot s_i(k_1) + \frac{t_2}{t_1+t_2}\cdot s_i(k_2)\leq s_i\left(\frac{t_1}{t_1+t_2}\cdot k_1 + \frac{t_2}{t_1+t_2}\cdot k_2\right)=s_i(k).$$
    The total time to process $x_1$ and $x_2$ can be computed as the total work ($x_1 +x_2$) divided by the average work rate.
    We thus have that $P'$ completes the $x_1 + x_2$ work sooner than $P$.
    As a result, $P'$ has a lower mean response time than $P$.

    It is easy to see that $P'$ does not use more total GPU-hours than $P$.
    Specifically, note that $P$ uses $k_1 t_1 + k_2 t_2$ GPU-hours to process $x_1$ and $x_2$.
    Let $t'_1$ and $t'_2$ be the times required to process $x_1$ and $x_2$ respectively under $P'$.  Then the GPU-hours used to process $x_1$ and $x_2$ under $P'$ is
    $$(t'_1 +t'_2)\left(\frac{k_1t_1}{t_1+t_2} + \frac{k_2t_2}{t_1+t_2}\right).$$
    We have already shown that $t'_1 + t'_2 \leq t_1+t_2$, giving 
    $$(t'_1 +t'_2)\left(\frac{k_1t_1}{t_1+t_2} + \frac{k_2t_2}{t_1+t_2}\right)\leq(t_1 +t_2)\left(\frac{k_1t_1}{t_1+t_2} + \frac{k_2t_2}{t_1+t_2}\right)= k_1t_1+k_2t_2.$$

    In summary, if we consider an optimal, non-fixed width policy, we can iteratively remove every violation of the fixed-width conditions without increasing the mean response time or violating the budget constraint.  
    Hence, there exists an optimal fixed width policy. $\qedwhite$
\end{proof}

Lemma~\ref{lemma:fixwidth} says that the optimal policy in the offline setting is a fixed width policy.    
To compute the optimal fixed width policy, it will be helpful to first develop an alternate formulation of the operating budget of a fixed width policy.

\begin{lemma}[Operating budget of a fixed width policy]
\label{lemma:budget of fixed width}
    Given a well-behaved sample path $\arrival$, the operating budget of a fixed width policy with parameters $k_1,k_2,\dots, k_M$ is
    \[\bar{B} :=\lim_{t\to\infty} \frac{\int_0^tK(s)ds}{t} \ \stackrel{(a)}{=} \ \lim_{t\to\infty} \frac{\sum_{i=1}^{n(t)}B^{(i)}}{t} \ \stackrel{(b)}{=}  \ \sum_{i=1}^M \frac{\rho_ik_i}{s_i(k_i)}.\]
    Here $B^{(i)}$ is defined to be the GPU-hours used to complete the $i^{th}$ arriving job.
\end{lemma}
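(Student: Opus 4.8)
I will prove the two equalities from right to left, which also establishes that all the limits exist. The only structural fact I need is that, under a fixed width policy, the $j$th type-$i$ job (of size $x_{ij}$) runs on $k_i$ GPUs for exactly $x_{ij}/s_i(k_i)$ time units and hence consumes $B^{(i,j)} = k_i x_{ij}/s_i(k_i)$ GPU-hours (here $s_i(k_i)\ge s_i(1)>0$ by monotonicity). Re-indexing the global arrival order $1,\dots,n(t)$ by (type, within-type rank) gives the exact identity $\sum_{i=1}^{n(t)} B^{(i)} = \sum_{i=1}^{M}\tfrac{k_i}{s_i(k_i)}\sum_{j=1}^{n_i(t)} x_{ij}$.

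For equality $(b)$, I divide by $t$ and write $\tfrac1t\sum_{j=1}^{n_i(t)} x_{ij} = \tfrac{n_i(t)}{t}\cdot\tfrac1{n_i(t)}\sum_{j=1}^{n_i(t)} x_{ij}$; well-behavedness of $\arrival$ sends the two factors to $\lambda_i$ and $\E[X_i]$, so the product tends to $\rho_i$, and summing over $i$ yields $\lim_{t\to\infty}\tfrac1t\sum_{i=1}^{n(t)} B^{(i)} = \sum_{i=1}^M \tfrac{\rho_i k_i}{s_i(k_i)}$. (A degenerate type with $\lambda_i=0$ contributes $0 = \rho_i k_i/s_i(k_i)$, so it is harmless.) In particular this limit exists, and the computation also shows $\sum_{j=1}^{n_i(\tau)}x_{ij} = \rho_i\tau + o(\tau)$ for each $i$.

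For equality $(a)$, I compare $\int_0^t K(s)\,ds$ with $\sum_{i=1}^{n(t)} B^{(i)}$ job by job. Since $K(s) = \sum_i k_i M_i(s)$, where $M_i(s)$ is the number of type-$i$ jobs in service at time $s$, Fubini shows $\int_0^t K(s)\,ds$ equals the total GPU-hours actually performed during $[0,t]$; writing $a_{ij}$ and $c_{ij}=a_{ij}+x_{ij}/s_i(k_i)$ for arrival and completion times, this differs from $\sum_{i=1}^{n(t)} B^{(i)}$ exactly by the future GPU-hours of the jobs still in service at time $t$, namely $0 \le \sum_{i=1}^{n(t)} B^{(i)} - \int_0^t K(s)\,ds = \sum_{i=1}^M k_i\sum_{j:\,a_{ij}\le t<c_{ij}}(c_{ij}-t)$. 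Bounding $c_{ij}-t\le x_{ij}/s_i(k_i)$, it suffices to show $\tfrac1t\sum_{j:\,a_{ij}\le t<c_{ij}} x_{ij}\to 0$ for each $i$. I split the in-service type-$i$ jobs by whether they arrived after $(1-\epsilon)t$ or before. The first group lies inside the type-$i$ arrivals during $((1-\epsilon)t,t]$, whose total size is $\sum_{j=1}^{n_i(t)}x_{ij}-\sum_{j=1}^{n_i((1-\epsilon)t)}x_{ij}=\rho_i\epsilon t+o(t)$ by the part-$(b)$ estimate. Any job in the second group satisfies $x_{ij}/s_i(k_i)=c_{ij}-a_{ij}>t-a_{ij}\ge\epsilon t$, i.e. $x_{ij}>s_i(k_i)\epsilon t$; but $\tfrac1n\sum_{j=1}^n x_{ij}\to\E[X_i]$ forces $x_{ij}/j\to 0$ by telescoping, which together with $n_i(t)=O(t)$ gives $\max_{j\le n_i(t)}x_{ij}=o(t)$, so for large $t$ this second group is empty. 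Hence $\limsup_t\tfrac1t\sum_{j:\,a_{ij}\le t<c_{ij}} x_{ij}\le\rho_i\epsilon$ for every $\epsilon>0$, so it is $0$; summing over $i$ the error is $o(t)$, so $\bar B$ exists and equals $\lim_t\tfrac1t\sum_{i=1}^{n(t)} B^{(i)}$, completing $(a)$.

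The only real obstacle is this error estimate: a priori the jobs in service at time $t$ could be carrying a non-negligible amount of not-yet-performed work, and ruling that out requires both halves of the well-behavedness assumption --- the arrival-rate half to control the mass of recently-arrived jobs, and the Cesàro consequence $x_{ij}=o(j)$ to exclude a single long-lived giant job. Everything else is routine accounting.
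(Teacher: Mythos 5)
Your proof is correct. For equality $(b)$ it takes essentially the same route as the paper: reindex the per-job GPU-hours by type, factor out $k_i/s_i(k_i)$, and apply the two halves of well-behavedness to recover $\rho_i$. Where you genuinely go further is equality $(a)$, which the paper merely asserts (``tracking the total GPU usage at every time $t$ is equivalent to tracking the GPU-hours used to process each job'') and whose proof it explicitly omits. Your accounting --- that $\sum_{i=1}^{n(t)}B^{(i)} - \int_0^t K(s)\,ds$ is exactly the residual GPU-hours of jobs still in service at time $t$, and that this is $o(t)$ by splitting those jobs into recent arrivals (total size $\rho_i\epsilon t + o(t)$ by the part-$(b)$ estimate) and long-lived jobs (ruled out because Ces\`aro convergence of the sizes forces $\max_{j\le n_i(t)}x_{ij}=o(t)$) --- is precisely the content the sketch leaves out, and it correctly identifies that both halves of the well-behavedness assumption are needed to kill the boundary term; this is a real contribution beyond the paper's argument, since a priori a single giant in-service job could make the two time averages differ. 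Two minor points: you implicitly use $s_i(k_i)>0$, which the paper also assumes only implicitly (otherwise completion times are infinite), and your telescoping step $x_{in}/n = S_n/n - \frac{n-1}{n}\cdot S_{n-1}/(n-1) \to 0$ deserves the one-line justification you gesture at, but both are easily supplied.
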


\begin{proof}
    Here we provide a proof sketch --- the full proof is omitted for brevity.

    Part (a) of our claim says that tracking the total GPU usage at every time $t$ is equivalent to tracking the GPU-hours used to process each job, $B^{(i)}$. 

    To prove part (b), we show that we can take the limit of this equivalent formulation to prove our claim.
    Note that the fixed width policy assigns $k_i$ cores to any type-$i$ job. 
    Hence, the GPU-hours spent on the $j^{th}$ type-$i$ job is $\frac{x_{ij}k_i}{s_i(k_i)}$.

    Thus we have that 
    \[\lim_{t\to\infty} \frac{\sum_{i=1}^{n(t)}B^{(i)}}{t}=\lim_{t\to\infty} \frac{\sum_{i=1}^{M} \sum_{j=1}^{n_i(j)} \frac{x_{ij}k_i}{s_i(k_i)}}{t} =\sum_{i=1}^M \frac{k_i}{s_i(k_i)}\left(\lim_{t\to\infty}
        \frac{\sum_{j=1}^{n_i(j)}x_{ij}}{t}
        \right),\]
    where
    \[\lim_{t\to\infty}
        \frac{\sum_{j=1}^{n_i(j)}x_{ij}}{t} = \lim_{t\to\infty}
        \frac{\sum_{j=1}^{n_i(j)}x_{ij}}{n_i(t)} \frac{n_i(t)}{t} = \lambda_i\E[X_i] = \rho_i.\]
    
    \vspace{-.15in}
    $\qedwhite$
\end{proof}

We now show that \policybest is offline optimal for any well-behaved sample path $\arrival$. 

\begin{lemma}
\label{lemma:policybest}
    For any well-behaved sample path $\arrival$, \policybest is the optimal offline policy.
\end{lemma}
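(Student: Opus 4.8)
The plan is to leverage Lemma~\ref{lemma:fixwidth} and Lemma~\ref{lemma:budget of fixed width} to reduce the problem to the optimization \eqref{eq:online opt}. By Lemma~\ref{lemma:fixwidth}, some fixed width policy is optimal among all offline policies with $\bar B \le b$, so it suffices to identify the best fixed width policy for $\arrival$. Fix widths $k_1,\dots,k_M$. A fixed width policy never queues, so the $j$th type-$i$ job has response time equal to its service time $x_{ij}/s_i(k_i)$; I would then compute this policy's mean response time by the same limiting argument used in the proof of Lemma~\ref{lemma:budget of fixed width}.

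Concretely, I would write
\[
\E[T] \;=\; \lim_{t\to\infty}\frac{1}{n(t)}\sum_{i=1}^M\frac{1}{s_i(k_i)}\sum_{j=1}^{n_i(t)}x_{ij}
\;=\;\frac{1}{\sum_{i=1}^M\lambda_i}\sum_{i=1}^M\frac{\rho_i}{s_i(k_i)},
\]
splitting $\tfrac{1}{n(t)}=\tfrac{t}{n(t)}\cdot\tfrac1t$ and using $n(t)/t\to\sum_i\lambda_i$ together with $\tfrac1t\sum_{j\le n_i(t)}x_{ij}\to\rho_i$, both of which follow from well-behavedness exactly as in Lemma~\ref{lemma:budget of fixed width}. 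The key point is that the prefactor $1/\sum_i\lambda_i$ is a constant of the sample path, independent of the chosen widths, so it does not affect which fixed width policy is best.

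Combining this with Lemma~\ref{lemma:budget of fixed width} (the operating budget of this policy is $\sum_i \rho_i k_i/s_i(k_i)$), the search for the mean-response-time-optimal fixed width policy with $\bar B\le b$ becomes exactly: minimize $\sum_i \rho_i/s_i(k_i)$ over $k_i\ge 1$ subject to $\sum_i \rho_i k_i/s_i(k_i)\le b$ --- which is precisely \eqref{eq:online opt}, the problem defining \policybest. Together with Lemma~\ref{lemma:fixwidth}, this yields the claim.

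I expect the only real obstacle to be arguing that the infimum in \eqref{eq:online opt} is attained, so that there genuinely is an optimal fixed width policy rather than merely a minimizing sequence. I would handle this by observing that the budget constraint confines each $k_i$ to a bounded interval whenever $s_i$ grows strictly sublinearly, yielding a compact feasible region, and by treating the degenerate exactly-linear speedup case (and the trivial case $\rho_i=0$, where type $i$ drops out of both objective and constraint) separately. Everything else is the routine limit bookkeeping already carried out for Lemma~\ref{lemma:budget of fixed width}.
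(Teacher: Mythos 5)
Your proposal is correct and follows essentially the same route as the paper: reduce to fixed width policies via Lemma~\ref{lemma:fixwidth}, compute the mean response time of a fixed width policy by the same limiting argument as in Lemma~\ref{lemma:budget of fixed width} to obtain $\frac{1}{\lambda}\sum_i \rho_i/s_i(k_i)$, and observe that this matches the objective of \eqref{eq:online opt} up to the constant $1/\lambda$. Your additional remark about attainment of the infimum is a reasonable point of care that the paper does not explicitly address, but it does not change the argument.
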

\begin{proof}
    Lemma~\ref{lemma:fixwidth} shows that the optimal offline policy is a fixed width policy. Thus, it suffices to show that \policybest is the optimal offline fixed width policy.

    For any job in $\arrival$ of type $i$ and size $x$, the response time under a fixed width policy parameterized with $k_i$ is $\frac{x}{s_i(k_i)}$. 
    Thus, we have that 
    \begin{align*}
        \E[T]:=\lim_{t\to\infty} \frac{\sum_{i=1}^{n(t)}T_i}{n(t)}=\lim_{t\to\infty}\sum_{i=1}^M \frac{\sum_{j=1}^{n_i(t)}x_{ij}}{s_i(k_i)n(t)}&=\sum_{i=1}^M\frac{1}{s_i(k_i)} \left(
    \lim_{t\to\infty} \frac{\sum_{j=1}^{n_i(t)}x_{ij}}{n(t)}
    \right)\\
    &=\frac{1}{\lambda}\sum_{i=1}^M \frac{\rho_i}{s_i(k_i)}.
    \end{align*}
    Moreover, by Lemma~\ref{lemma:budget of fixed width}, the operating budget is $\sum_{i=1}^M \frac{\rho_ik_i}{s_i(k_i)}$. Thus solving the optimal set of $k_1,k_2,...,k_M$ is solving the following optimization problem:
    \begin{equation*}
        \begin{aligned}
        & \underset{k_i,i\in[M]} {\text{minimize}}
        & & \frac{1}{\lambda}\sum_{i=1}^M\frac{  \rho_i }{s_i(k_i)}\\
        & \text{subject to}
        & & \sum_{i=1}^M\frac{\rho_i k_i}{s_i(k_i)}\leq b.\\
        \end{aligned}
    \end{equation*}

This is the same as the optimization problem (\ref{eq:online opt}) except for a constant $1/\lambda$. This shows that \policybest is the optimal fixed width policy. $\qedwhite$
\end{proof}

\subsection{Returning to the online problem}\label{sec:online}

We now prove our main theorem, Theorem~\ref{thm:main}. 
Lemma~\ref{def:policybest} shows that, for any well-behaved sample path, policy \policybest is offline optimal. Because \policybest only uses the values of the $\rho_i$'s (the load of each type), and not any specific job arrival times or sizes, \policybest is an online policy. Hence, \policybest is the optimal online policy for any well-behaved sample path.  
We assumed that the set of sample paths that are not well-behaved has measure 0, so \policybest minimizes mean response time.


\section{Conclusion}
We show that, when running ML training jobs in the cloud with a fixed operating budget, we can compute an optimal policy for renting GPUs.
While our policy does depend on the speedup functions of the different jobs, surprisingly, it does not prioritize short jobs over long ones, as suggested in prior work \cite{berg2020hesrpt}.

\section{Acknowledgement}
This work was funded by NSF-CMMI-2307008 and NSF-III-2322973


%
%
%

\bibliographystyle{splncs04}
\bibliography{bib,bibshort}

\end{document}